\documentclass[aps,pra,twocolumn,superscriptaddress,
preprintnumbers,amsmath,amssymb,floatfix]{revtex4-2}

\usepackage{graphicx}
\usepackage{subfigure}
\usepackage{amsthm}
\usepackage{tensor}
\usepackage{color}
\usepackage[all]{xy}
\usepackage{tikz}
\usepackage{dsfont}
\usepackage{times,txfonts}
\usetikzlibrary{positioning}
\usepackage{braket}
\newtheorem{Thm}{Theorem}

\theoremstyle{definition}

\newcommand{\Tr}{\mathop{\mathrm{Tr}}\nolimits}

\begin{document}

\title{Contrasting Effects of Control on Fidelity and Fidelity Deviation in Controlled Teleportation}

\author{Jeonghyeon Shin} 
\affiliation{
Center for Quantum Information, Korea Institute of Science and Technology (KIST), Seoul 02792, Republic of Korea
}
\affiliation{
Department of Mathematics and Research Institute for Basic Sciences,
Kyung Hee University, Seoul 02447, Republic of Korea}

\author{Minjin Choi}
\email{mathcmj89@gmail.com}
\affiliation{
Center for Quantum Information R\&D, Korea Institute of Science and Technology Information~(KISTI), Daejeon 34141, Republic of Korea
}

\date{\today}

\begin{abstract}
Teleportation performance is commonly characterized by the average teleportation fidelity, while its variation over input states provides additional information captured by the fidelity deviation.
In controlled teleportation, the controller's measurement introduces an additional source of fidelity variation through its measurement outcomes.
We investigate the fidelity deviation in controlled teleportation with three-qubit pure states.
We derive a lower bound on the fidelity deviation and show that it is attainable together with the maximal average teleportation fidelity.
Among the measurements attaining the maximal average fidelity, however, the fidelity deviation can vary depending on the controller's measurement.
We further examine the effect of controller assistance by comparing controlled teleportation with direct teleportation using the reduced state.
Unlike the maximal average fidelity, which cannot decrease with controller assistance, the minimal fidelity deviation is not necessarily reduced by controller assistance.
In particular, for any W-class pure state, controller assistance cannot reduce the fidelity deviation.
These results reveal contrasting effects of control on teleportation fidelity and its deviation in controlled teleportation.
\end{abstract}
\maketitle

\maketitle

\section{Introduction}
\label{sec: Introduction}

Quantum teleportation is one of the fundamental protocols of quantum information processing, enabling the transmission of an unknown quantum state between distant parties through shared quantum resources and classical communication~\cite{Bennett1993, Bouwmeester1997}.
Its performance is commonly characterized by the average teleportation fidelity, which quantifies the fidelity between the input and teleported states averaged over all possible inputs~\cite{Popescu1994, Horodecki1999, Badziag2000}.
The average fidelity, however, does not capture how the teleportation fidelity varies from one input state to another.
Such variations can be relevant even when two teleportation processes exhibit the same average fidelity, as their fidelities for individual input states may be distributed differently around the average.
To characterize this aspect of teleportation, the fidelity deviation was introduced as the standard deviation of the teleportation fidelity over input states~\cite{Bang2018}.
Subsequent work established a general expression for the fidelity deviation corresponding to optimal teleportation with an arbitrary two-qubit resource state~\cite{Ghosal2020}.
Recent studies have also explored teleportation performance beyond the average fidelity from the perspectives of fidelity deviation and fidelity distribution~\cite{Cho2026, Bussandri2026}.
The average fidelity and fidelity deviation thus characterize complementary features of the fidelity distribution, namely its average value and fluctuations across input states.

Controlled teleportation involves three parties sharing a quantum state, where one party acts as a controller and performs a local measurement to assist teleportation between the other two parties~\cite{Karlsson1998, Gao2008, Lee2005, Lee2007}.
Depending on the measurement outcome, the remaining two parties generally share different conditional resource states, thereby affecting the achievable teleportation fidelity.
The controller can therefore influence the teleportation performance through the choice of measurement.
In particular, optimizing over the controller's measurements determines the maximal average teleportation fidelity achievable with controller assistance~\cite{Lee2005, Lee2007, Choi2024}.
The effect of control has also been quantified by comparing the teleportation fidelities achievable with and without the controller's assistance, leading to the notion of control power~\cite{Li2014, Li2015, Jeong2016}.

Beyond its effect on the average teleportation fidelity, the controller's measurement can also influence the fidelity deviation.
In controlled teleportation, the fidelity deviation involves not only the variation over input states for each conditional resource state but also the variation associated with the controller's measurement outcomes.
This motivates examining how the fidelity deviation depends on the controller's measurement.
A further question is how the effect of controller assistance on the fidelity deviation can be characterized by comparing controlled teleportation with direct teleportation using the reduced state, analogous to the comparison underlying the control power.

In this work, we analyze the fidelity deviation in controlled teleportation with three-qubit pure states.
We characterize the fidelity deviation associated with the controller's measurement and derive a lower bound that is attainable together with the maximal average teleportation fidelity.
This analysis also clarifies the effect of the controller's measurement choice on the fidelity deviation for the same maximal average fidelity.
We further compare controlled teleportation with direct teleportation using the reduced state to examine the effect of controller assistance on the fidelity deviation.
In contrast to the fidelity enhancement characterized by the control power, controller assistance does not necessarily reduce the fidelity deviation.
For W-class pure states, we establish that the fidelity deviation cannot be decreased through controller assistance, whereas such a reduction is possible for states outside the W class.
These results show that the fidelity deviation provides a distinct perspective on the effect of control beyond that captured by the average teleportation fidelity.

The remainder of this paper is organized as follows.
Sec.~\ref{sec: telecapa} reviews the average teleportation fidelity and fidelity deviation for two-qubit resource states, and introduces the controlled teleportation setting together with its maximal average fidelity.
Sec.~\ref{sec:total_variance} analyzes the fidelity deviation under the controller's measurement, and establishes its lower bound and attainability for three-qubit pure states.
Sec.~\ref{sec:effect_control} examines the effect of controller assistance on the fidelity deviation, with particular attention to W-class pure states.
Finally, Sec.~\ref{sec: conclusion} concludes the paper.

\section{Teleportation capability in controlled teleportation} 
\label{sec: telecapa}

Quantum teleportation is a protocol that enables two parties, Alice and Bob, sharing a two-qubit state $\rho_{AB}$, to transmit an unknown qubit state using only local operations and classical communication~\cite{Bennett1993}.
For a given teleportation scheme $\Lambda_{\rho_{AB}}$ using $\rho_{AB}$, the average teleportation fidelity is defined as~\cite{Popescu1994}
\begin{equation}
\label{eq:def_fidel}
F\left(\Lambda_{\rho_{AB}}\right)=\int d\xi\bra{\xi}\Lambda_{\rho_{AB}}(|\xi\rangle\langle\xi|)\ket{\xi},
\end{equation}
where the integral runs over the uniform distribution of all one-qubit pure states.
Introducing the fully entangled fraction of $\rho_{AB}$~\cite{Bennett1996},
\begin{equation}
\label{eq:fraction}
f(\rho_{AB})=\max_{\ket{e}}\bra{e}\rho_{AB}\ket{e},
\end{equation}
where the maximum is taken over maximally entangled two-qubit states $\ket{e}$,
it has been shown that the maximal average teleportation fidelity achievable within the standard teleportation protocol is given by~\cite{Horodecki1999, Badziag2000}
\begin{equation}
\label{eq:two-qubit_fidel}
F(\rho_{AB})=\frac{2f(\rho_{AB})+1}{3}.
\end{equation}
We denote this maximal average fidelity by $F\left(\rho_{AB}\right)$ throughout this paper.

Controlled teleportation extends this protocol to three parties sharing a three-qubit state $\rho_{ABC}$, one of whom acts as the controller.
The controller first performs an orthogonal measurement on his or her own qubit, and depending on the outcome, the remaining two parties carry out the standard teleportation protocol using the resulting two-qubit state.
For distinct $i,j,k \in \{A,B,C\}$, let $\rho_{ij}^{M,t}$ denote the two-qubit state shared by parties $i$ and $j$ after party $k$ measures his or her part in an orthonormal basis $M=\{\ket{u_0},\ket{u_1}\}$ and obtains outcome $t \in \{0, 1\}$~\cite{Karlsson1998}.
The maximal average fidelity between $i$ and $j$ achievable with controller assistance is then given by~\cite{Lee2005, Lee2007}
\begin{equation}
\label{eq:tele_fidel}
F_{ij}(\rho_{ABC})=\frac{2f_{ij}(\rho_{ABC})+1}{3},
\end{equation}
where
\begin{equation}
\label{eq:tele_frac}
f_{ij}(\rho_{ABC})=\max_{M}\sum_{t=0}^{1}
\bra{u_t}\rho_{k}\ket{u_t}f\left(\rho_{ij}^{M, t}\right),
\end{equation}
and the maximum is taken over the controller's choice of measurement basis.

The maximal average fidelity of a two-qubit pure state is closely related to the amount of entanglement it contains.
For a two-qubit pure state $\ket{\phi}_{AB}$, the Schmidt decomposition gives $\ket{\phi}_{AB}=\sqrt{a}\ket{u_{0}v_{0}}+\sqrt{1-a}\ket{u_{1}v_{1}}$, where $0 \le a \le 1$, and $\{\ket{u_{0}}, \ket{u_{1}}\}$ and $\{\ket{v_{0}}, \ket{v_{1}}\}$ are orthonormal bases.
It follows from a direct calculation that $F(\ket{\phi}_{AB})=\frac{2}{3}+\frac{2}{3}\sqrt{a(1-a)}$.
The concurrence, introduced by Wootters for two-qubit states~\cite{Hill1997, Wootters1998} and generalized to arbitrary-dimensional bipartite pure states by Rungta \textit{et al.}~\cite{Rungta2001}, is given for a pure state $\ket{\psi}_{P_{1}P_{2}}$ by
\begin{equation}
\label{eq:def_concurrence}
C_{P_{1}|P_{2}}\left(\ket{\psi}_{P_{1}P_{2}}\right)=\sqrt{2\left(1-\mathrm{Tr}\left(\varrho_{P_{1}}^{2}\right)\right)},
\end{equation}
where $\varrho_{P_{1}}$ is the reduced density operator of $\ket{\psi}_{P_{1}P_{2}}$.
Therefore, $C_{A|B}(\ket{\phi}_{AB})=2\sqrt{a(1-a)}$, which yields
\begin{equation}
\label{eq:pure_fidelity_concurrence}
F(\ket{\phi}_{AB})=\frac{2}{3}+\frac{1}{3}C_{A|B}(\ket{\phi}_{AB}).
\end{equation}
Hence, $\ket{\phi}_{AB}$ is separable if and only if $F(\ket{\phi}_{AB})=2/3$. 
Moreover, for two-qubit pure states, greater entanglement corresponds to a higher maximal average fidelity.

The maximal average fidelity of a three-qubit pure state is similarly related to its entanglement structure.
For a three-qubit pure state $\ket{\phi}_{ABC}$ and distinct $i,j,k \in \{A,B,C\}$, let $\rho_{ij}$ denote the two-qubit state obtained by tracing out qubit $k$.
The maximal average fidelity then satisfies the relation~\cite{Lee2005}
\begin{equation}
\label{eq:partial_tangle_fidelity}
F_{ij}(\ket{\phi}_{ABC})=\frac{2}{3}+\frac{1}{3}\tau_{ij}(\ket{\phi}_{ABC}),
\end{equation}
where
\begin{equation}
\label{eq:partial_tangle}
\tau_{ij}(\ket{\phi}_{ABC})=\sqrt{C_{i|j}(\rho_{ij})^{2}+\tau_{ABC}(\ket{\phi}_{ABC})},
\end{equation}
with the three-tangle $\tau_{ABC}$ given by~\cite{Coffman2000}
\begin{equation}
\tau_{ABC}(\ket{\phi}_{ABC})=C_{i|jk}(\ket{\phi}_{ABC})^{2}-C_{i|j}(\rho_{ij})^{2}-C_{i|k}(\rho_{ik})^{2}.
\end{equation}
The concurrence of the mixed state $\rho_{ij}$ can be calculated by~\cite{Hill1997, Wootters1998}
\begin{equation}
\label{eq:concurrence_mixed}
C_{i|j}(\rho_{ij})=\max\{0, \lambda_{1}-\lambda_{2}-\lambda_{3}-\lambda_{4}\},
\end{equation}
where $\lambda_{1},\ldots, \lambda_{4}$ are the square roots of the eigenvalues of $\rho_{ij}\tilde{\rho}_{ij}$ in decreasing order. 
Here, $\tilde{\rho}_{ij}=(\sigma_y\otimes \sigma_y)\rho^*_{ij}(\sigma_y\otimes \sigma_y)$ with the Pauli $Y$ operator $\sigma_{y}$ and the complex conjugate $\rho_{ij}^{*}$ of $\rho_{ij}$.

The quantity $\tau_{ij}$ can be directly expressed in terms of the concurrence of assistance~\cite{Laustsen2003,Gour2005}.
For a two-qubit state $\rho$, the concurrence of assistance is defined as
\begin{equation}
\label{eq:coa_definition}
C_{a}(\rho)=\max_{\{p_{\mu},\ket{\eta_{\mu}}\}}\sum_\mu p_{\mu} C(\ket{\eta_\mu}),
\end{equation}
where the maximum is over all pure-state decompositions $\rho=\sum_\mu p_\mu|\eta_\mu\rangle\langle\eta_\mu|$.
When $\rho$ is regarded as the reduced state of a larger pure state, $C_{a}(\rho)$ quantifies the maximum average concurrence attainable with assistance from the remaining subsystem.
For a reduced two-qubit state $\rho_{ij}$ of a three-qubit pure state $\ket{\phi}_{ABC}$, the concurrence of assistance satisfies~\cite{Chi2010}
\begin{equation}
\label{eq:coa_tangle}
C_a(\rho_{ij})^2=C_{i|j}(\rho_{ij})^2+\tau_{ABC}(\ket{\phi}_{ABC}).
\end{equation}
Comparing Eqs.~\eqref{eq:partial_tangle} and \eqref{eq:coa_tangle} gives $\tau_{ij}(\ket{\phi}_{ABC})=C_a(\rho_{ij})$, and hence Eq.~\eqref{eq:partial_tangle_fidelity} can equivalently be written as
\begin{equation}
\label{eq:ct_fidelity_coa}
F_{ij}(|\phi\rangle_{ABC})=\frac{2+C_a(\rho_{ij})}{3}.
\end{equation}
Thus, for three-qubit pure states, a larger average entanglement attainable with assistance corresponds to a higher maximal average fidelity.

\section{Fidelity deviation in controlled teleportation} 
\label{sec:total_variance}

While the maximal average fidelity $F(\rho_{AB})$ characterizes the average performance of teleportation, it does not describe how uniformly different input states are teleported.
To quantify the fluctuation of teleportation fidelity over input states, the fidelity deviation was introduced as~\cite{Bang2018}
\begin{equation}
\Delta(\Lambda_{\rho_{AB}})=\sqrt{\int d\xi \left[\bra{\xi}\Lambda_{\rho_{AB}}(|\xi\rangle\langle\xi|)\ket{\xi} - F(\Lambda_{\rho_{AB}})\right]^{2}},
\end{equation}
for a given teleportation scheme $\Lambda_{\rho_{AB}}$.
A smaller fidelity deviation indicates less variation in teleportation fidelity among different input states, with zero deviation corresponding to equal fidelity for all input states.
For a resource state $\rho_{AB}$, we consider the fidelity deviation corresponding to the standard teleportation protocol achieving the maximal average fidelity $F(\rho_{AB})$, and denote it by $\Delta(\rho_{AB})$.
For a two-qubit pure state $\ket{\phi}_{AB}$, this fidelity deviation can be expressed in terms of concurrence as~\cite{Ghosal2020}
\begin{equation}
\label{eq:pure_concurrence_deviation}
\Delta(\ket{\phi}_{AB})=\frac{1}{3\sqrt{5}}[1-C_{A|B}(\ket{\phi}_{AB})].
\end{equation}
Together with Eq.~\eqref{eq:pure_fidelity_concurrence}, this shows that, for two-qubit pure states, greater entanglement corresponds to a higher maximal average fidelity and a smaller fidelity deviation.
In particular, the fidelity deviation vanishes if and only if the state is maximally entangled.

In controlled teleportation, the controller's measurement introduces an additional source of fidelity fluctuation, since different measurement outcomes generally leave the remaining two parties with different resource states.
Therefore, in addition to the variation over input states characterized by the fidelity deviation, the variation among the controller's measurement outcomes should also be taken into account.
Let $\rho_{ABC}$ be a three-qubit state, with party $k \in \{A, B, C\}$ acting as the controller, and consider an orthogonal projective measurement $M$ on qubit $k$. 
For each outcome $t \in \{0, 1\}$ with probability $p_t$, we denote, for simplicity, the maximal average fidelity and fidelity deviation of the resulting two-qubit state by $F_{t}$ and $\Delta_{t}$, respectively.
By the law of total variance, the total variance of the teleportation fidelity associated with $M$ is given by
\begin{equation}
\label{eq:CT_variance}
\Delta^{2}_{ij}(M;\rho_{ABC})=\mathbb{E}\left[\Delta_{t}^{2}\right]+\operatorname{Var}(F_{t}),
\end{equation}
where $\mathbb{E}$ and $\operatorname{Var}$ denote the expectation and variance, respectively, over the controller's measurement outcomes with probabilities $p_{t}$.
The first term in Eq.~\eqref{eq:CT_variance} represents the average variance over input states within each measurement outcome, whereas the second represents the variance of the average fidelities across different outcomes.
We refer to $\Delta_{ij}(M;\rho_{ABC})$ as the fidelity deviation of controlled teleportation associated with $M$, and define the minimal fidelity deviation over the controller's measurements as
\begin{equation}
\label{eq:minimal_fidelity_deviation}
\Delta_{ij}(\rho_{ABC})=\min_{M}\Delta_{ij}(M;\rho_{ABC}).
\end{equation}

For a three-qubit pure state $\ket{\phi}_{ABC}$, each measurement outcome leaves the remaining two parties in a pure state.
Denoting the concurrence of the resulting state for outcome $t$ by $C_{t}$, Eqs.~\eqref{eq:pure_fidelity_concurrence} and \eqref{eq:pure_concurrence_deviation} give $F_{t}=(2+C_{t})/3$ and $\Delta_{t}=(1-C_{t})/(3\sqrt{5})$.
Substituting these relations into Eq.~\eqref{eq:CT_variance} yields
\begin{align}
\label{eq:CT_variance_concurrence}
\Delta^{2}_{ij}(M;\ket{\phi}_{ABC}) &=\frac{1}{45}\mathbb{E}\left[(1-C_{t})^{2}\right]+\frac{1}{9}\operatorname{Var}(C_{t}) \nonumber \\
&=\frac{1}{45}(1-\mathbb{E}[C_{t}])^{2}+\frac{2}{15}\operatorname{Var}(C_{t}).
\end{align}
Eq.~\eqref{eq:CT_variance_concurrence} yields a lower bound on the squared fidelity deviation that holds for any controller measurement $M$.
It follows from $\mathbb{E}[C_{t}] \le C_{a}(\rho_{ij}) \le 1$ and $\operatorname{Var}(C_{t}) \ge 0$, where $\rho_{ij}=\Tr_{k}(|\phi\rangle\langle\phi|_{ABC})$, that
\begin{equation}
\label{eq:variance_lower_bound}
\Delta^{2}_{ij}(M;\ket{\phi}_{ABC}) \ge \frac{1}{45}\left[1-C_a(\rho_{ij})\right]^2.
\end{equation}
Equality holds if and only if
\begin{equation}
\mathbb{E}[C_{t}]=C_a(\rho_{ij})\quad\text{and}\quad\operatorname{Var}(C_{t})=0,
\end{equation}
or equivalently, $C_{t}=C_a(\rho_{ij})$ for every outcome with nonzero probability.
Therefore, a measurement saturating Eq.~\eqref{eq:variance_lower_bound} simultaneously achieves the maximal average fidelity $F_{ij}$ and attains the minimal fidelity deviation $\Delta_{ij}$. 
We next show that such a measurement always exists for any three-qubit pure state.

\begin{Thm}
\label{res:thm1}
For any three-qubit pure state $\ket{\phi}_{ABC}$ with party $k \in \{A,B,C\}$ acting as the controller, there exists an orthogonal projective measurement $M^{\star}$ such that the concurrence $C_{t}$ of the resulting two-qubit state satisfies
\begin{equation}
C_t=C_a(\rho_{ij}),
\end{equation}
for every measurement outcome $t$ with nonzero probability, where $\rho_{ij}=\Tr_{k}(|\phi\rangle\langle\phi|_{ABC})$.
Consequently, $M^{\star}$ attains the minimal fidelity deviation $\Delta_{ij}$ by saturating the lower bound in Eq.~\eqref{eq:variance_lower_bound}, while simultaneously achieving the maximal average fidelity $F_{ij}$.
\end{Thm}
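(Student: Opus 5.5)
We construct $M^{\star}$ by reducing the problem to a statement about pure-state decompositions of the reduced state $\rho_{ij}$ that use exactly two orthogonal "directions" on the controller's side. Write the controller's orthonormal basis as $M=\{\ket{u_0},\ket{u_1}\}$. Outcome $t$ leaves $i,j$ in the normalized state $\ket{\eta_t}\propto(\mathbb{1}_{ij}\otimes\bra{u_t}_k)\ket{\phi}_{ABC}$ with probability $p_t=\bra{u_t}\rho_k\ket{u_t}$. By the Hughston--Jozsa--Wootters correspondence, the ensembles $\{p_t,\ket{\eta_t}\}$ obtained this way are exactly the two-element pure-state decompositions of $\rho_{ij}$ arising from a $2\times 2$ unitary mixing of a fixed decomposition. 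Concretely, write $\ket{\phi}=\ket{\tilde x_0}\ket{0}_k+\ket{\tilde x_1}\ket{1}_k$ with unnormalized $\ket{\tilde x_s}$. Measuring in $\ket{u_t}=\sum_s \overline{U_{ts}}\ket{s}$ then yields $\sqrt{p_t}\ket{\eta_t}=\sum_s U_{ts}\ket{\tilde x_s}$ with $U\in U(2)$.

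The concurrence of the unnormalized vector $\sqrt{p}\ket{\eta}$ is $p\,C(\ket{\eta})=|\langle\tilde\eta^{*}|\sigma_y\otimes\sigma_y|\tilde\eta\rangle|$. This is the modulus of a symmetric bilinear form. Let $\tau$ be the $2\times2$ complex symmetric matrix $\tau_{ss'}=\bra{\tilde x_s^{*}}\sigma_y\otimes\sigma_y\ket{\tilde x_{s'}}$. Then $p_tC_t=|(U\tau U^{T})_{tt}|$. Following Wootters and Laustsen--Verstraete--van Enk, I will take the Takagi factorization $\tau=V D V^{T}$ with $D=\mathrm{diag}(\lambda_1,\lambda_2)$, $\lambda_1\ge\lambda_2\ge0$. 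The concurrence of assistance is then $C_a(\rho_{ij})=\mathrm{Tr}\,|\tau|=\lambda_1+\lambda_2$, since $\sum_t|(U\tau U^T)_{tt}|\le \sum \lambda$ with equality when the diagonal entries all have the same phase. I will take this trace-norm formula as the known characterization underlying Eq.~\eqref{eq:coa_tangle}.

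The main step, and the expected obstacle, is to choose $U$ so that simultaneously (a) the diagonal entries of $U\tau U^T$ have a common phase, so that $\mathbb{E}[C_t]=C_a$, and (b) $C_t=|(U\tau U^T)_{tt}|/p_t$ is the same for both $t$. Condition (b) is the new requirement beyond the known optimality. Starting from the basis $V^{-1}$, in which $\tau$ is diagonal and positive, I will compose with a real-parameter unitary of the form $W=\begin{pmatrix}\cos\theta & i\sin\theta\\ i\sin\theta&\cos\theta\end{pmatrix}$ up to phases, or more simply with a Hadamard-like matrix with phases $e^{i\varphi}$. The diagonal entries then become $\cos^2\theta\,\lambda_1-\sin^2\theta\,\lambda_2$-type combinations, and I will choose the phases so that both diagonal entries equal $(\lambda_1+\lambda_2)/2$ in value and phase. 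For example, $W=\frac{1}{\sqrt2}\begin{pmatrix}1&i\\ 1&-i\end{pmatrix}$ gives diagonal entries $(\lambda_1-\lambda_2)/2$. The correct choice is instead $\frac{1}{\sqrt2}\begin{pmatrix}1&1\\1&-1\end{pmatrix}$, which gives $(\lambda_1+\lambda_2)/2$ for both entries. This settles (a), and equalizes the numerators $p_tC_t$.

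Equal numerators are not enough, however. I then need $p_0=p_1$, or else a separate adjustment of $\theta$ that keeps the phases aligned. If the probabilities are unequal, I will use the remaining freedom: real orthogonal rotations $R_\theta$ preserve the realness of $R\tau R^T$, and one must keep both diagonal entries nonnegative. I will then solve $|(R\tau R^T)_{00}|/p_0(\theta)=|(R\tau R^T)_{11}|/p_1(\theta)$ by an intermediate-value argument in $\theta$. At $\theta$ giving the eigenbasis the ratio difference has one sign, and after swapping the two outcomes it has the opposite sign, so a continuous path crosses zero. Throughout, the diagonal entries stay nonnegative within the range where $\cos^2\theta\lambda_1+\sin^2\theta\lambda_2\ge0$, which is automatic.

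Having found such $U$, setting $M^{\star}$ to the corresponding basis gives $C_t$ equal across outcomes and $\sum_tp_tC_t=C_a$, hence $C_t=C_a(\rho_{ij})$ for every outcome with $p_t>0$. The degenerate case where one $p_t=0$ is immediate. Equality in Eq.~\eqref{eq:variance_lower_bound} then follows, and by Eq.~\eqref{eq:ct_fidelity_coa} $M^\star$ also attains $F_{ij}$.
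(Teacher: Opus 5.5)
Your argument is correct, but it takes a genuinely different route from the paper. The paper works in the Ac\'in canonical form and quotes a closed-form expression for $C_a(\rho_{BC})$. It computes $p_tC_t=2|u_{t1}||X_t|$ explicitly and then writes down $M^\star$ in closed form. With $\phi=\pi/2$, a single linear trigonometric equation in $\gamma$ enforces both $p_0=|u_{01}|^2$ and $|X_0|=\tfrac12 C_a(\rho_{BC})|u_{01}|$, and the $t=1$ conditions follow from completeness.

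You instead argue coordinate-free. Via the HJW correspondence and the Takagi factorization of $\tau$, you isolate a one-parameter family of measurements: real rotations $R_\theta$ in the Takagi basis. Along this family, $p_0C_0+p_1C_1=\lambda_1+\lambda_2=C_a$ holds identically. You then balance $C_0=C_1$ using the fact that $\theta\mapsto\theta+\pi/2$ swaps the two outcomes, so the imbalance changes sign.

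This avoids the canonical form and makes transparent why the two requirements are compatible. The price is that you rely on the Laustsen--Verstraete--van Enk formula $C_a=\lambda_1+\lambda_2$, and the argument is non-constructive as written. It can, however, be made explicit in the same spirit as the paper. Write $n_t=p_tC_t$. Since $n_0+n_1=\lambda_1+\lambda_2$ and $p_0+p_1=1$, the balancing condition $n_0p_1=n_1p_0$ reduces to $n_0(\theta)=(\lambda_1+\lambda_2)\,p_0(\theta)$. This is a homogeneous linear equation in $\cos2\theta$ and $\sin2\theta$ (the constant terms cancel because $\Tr\rho_{ij}=1$), so it is always solvable. It is the exact analogue of the paper's equation for $\gamma$, with outcome $t=1$ again handled by completeness.

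Two small repairs are needed.

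First, your criterion (a) is misstated. A common phase of the two diagonal entries of $U\tau U^{\mathsf T}$ does not give $\sum_t|(U\tau U^{\mathsf T})_{tt}|=\lambda_1+\lambda_2$. Your own example with the phase $i$ shows this: it gives two equal entries $(\lambda_1-\lambda_2)/2$. What suffices is that, in each row $t$ of $W=UV$, the product $W_{t0}\overline{W_{t1}}$ is real, i.e.\ $W_{t0}^2$ and $W_{t1}^2$ have a common phase. Restricting to $W=R_\theta$ up to row phases guarantees exactly this, so your final construction is unaffected.

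Second, apply the intermediate value theorem to the continuous function $n_0p_1-n_1p_0$ rather than to $C_0-C_1$. The latter is undefined wherever some $p_t(\theta)=0$. That happens only when $\rho_{ij}$ has rank one, in which case every outcome leaves the same pure state and the claim is trivial.
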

\begin{proof}
It suffices to consider party $A$ as the controller, since the other cases follow by permutation of the subsystems.
Up to local unitaries, an arbitrary three-qubit pure state can be written in the Ac\'in canonical form~\cite{Acin2000, Acin2001}
\begin{equation}
\label{eq:acin_form}
\ket{\phi}_{ABC}=\lambda_{0}\ket{000}+\lambda_{1}e^{i\theta}\ket{100}+\lambda_{2}\ket{101}+\lambda_{3}\ket{110}+\lambda_{4}\ket{111},
\end{equation}
where $\lambda_m\geq0$, $\sum_{m=0}^{4}\lambda_m^2=1$, and $0\leq\theta\leq\pi$.
For this state, the concurrence of assistance of $\rho_{BC}$ is given by~\cite{Chi2010}
\begin{equation}
\label{eq:Coa_rhoBC}
C_a(\rho_{BC})=2\sqrt{\lambda_0^{2}\lambda_4^{2}+|\lambda_1\lambda_4e^{i\theta}-\lambda_2\lambda_3|^{2}}.
\end{equation}
Consider an orthonormal measurement basis $\{\ket{u_0},\ket{u_1}\}$ on qubit $A$, with $\bra{u_t}=u_{t0}\bra{0}+u_{t1}\bra{1}$, and denote the probability of outcome $t$ by $p_{t}$.
Define
\begin{equation}
X_t=\lambda_0\lambda_4u_{t0}+\left(\lambda_1\lambda_4e^{i\theta}-\lambda_2\lambda_3\right)u_{t1}.
\end{equation}
A straightforward calculation gives $p_tC_t=2|u_{t1}||X_t|$.
Hence, $C_{t}=C_{a}(\rho_{BC})$ is obtained whenever
\begin{equation}
\label{eq:average_optimal_condition}
p_t=|u_{t1}|^2 \quad\text{and}\quad |X_t|=\frac{C_a(\rho_{BC})}{2}|u_{t1}|
\end{equation}
for every outcome with nonzero probability.

We now show that an orthogonal measurement satisfying Eq.~\eqref{eq:average_optimal_condition} always exists.
Parameterize the measurement basis as
\begin{align}
|u_0\rangle&=\cos\frac{\gamma}{2}|0\rangle+e^{i\phi}\sin\frac{\gamma}{2}|1\rangle,\notag\\
|u_1\rangle&=-e^{-i\phi}\sin\frac{\gamma}{2}|0\rangle+\cos\frac{\gamma}{2}|1\rangle.
\label{eq:measurement_parameterization}
\end{align}
Choosing $\phi=\pi/2$, a direct calculation for outcome $t=0$ gives
\begin{align}
\label{eq:p0 condition}
&p_0-|u_{01}|^2=\lambda_0\left(\lambda_0\cos\gamma+\lambda_1\sin\theta\sin\gamma\right), \nonumber \\
&|X_0|^2-\frac{C_a(\rho_{BC})^2}{4}|u_{01}|^2=\lambda_0\lambda_4^2\left(\lambda_0\cos\gamma+\lambda_1\sin\theta\sin\gamma\right).
\end{align}
Thus, both conditions in Eq.~\eqref{eq:average_optimal_condition} for $t=0$ are satisfied by choosing $\gamma$ such that $\lambda_0\cos\gamma+\lambda_1\sin\theta\sin\gamma=0$.
One may choose $\gamma$ satisfying $\tan\gamma=-\lambda_0/(\lambda_{1}\sin\theta)$ if $\lambda_{1}\sin\theta \neq 0$, and $\gamma=\pi/2$ otherwise.

For this choice, completeness gives $p_1=1-p_0=|u_{11}|^2$. 
Moreover, from the definition of $X_{t}$, the orthonormality of the measurement basis, and Eq.~\eqref{eq:Coa_rhoBC},
\begin{equation}
\sum_{t=0}^{1}|X_t|^2=\frac{C_a(\rho_{BC})^{2}}{4}.
\end{equation}
Together with Eq.~\eqref{eq:p0 condition}, this gives
\begin{equation}
|X_1|^2=\frac{C_a(\rho_{BC})^2}{4}|u_{11}|^2.
\end{equation}
Hence, the conditions in Eq.~\eqref{eq:average_optimal_condition} hold for both outcomes, yielding $C_t=C_a(\rho_{BC})$ for every outcome with nonzero probability.
\end{proof}

Theorem~\ref{res:thm1} shows that the maximal average fidelity can always be attained simultaneously with the minimal fidelity deviation.
Consequently, for any three-qubit pure state, the minimal fidelity deviation in Eq.~\eqref{eq:minimal_fidelity_deviation} is given by
\begin{equation}
\label{eq:ct_deviation_coa}
\Delta_{ij}(\ket{\phi}_{ABC})=\frac{1}{3\sqrt{5}}\left[1-C_a(\rho_{ij})\right].
\end{equation}
Combining Eqs.~\eqref{eq:ct_fidelity_coa} and \eqref{eq:ct_deviation_coa}, the minimal fidelity deviation is directly related to the maximal average fidelity as
\begin{equation}
\Delta_{ij}(\ket{\phi}_{ABC})=\frac{1}{\sqrt{5}}\left[1-F_{ij}(\ket{\phi}_{ABC})\right].
\end{equation}
Thus, the optimal values of the average fidelity and fidelity deviation are linearly related for three-qubit pure states.

However, while the maximal average fidelity only requires $\mathbb{E}[C_{t}]=C_a(\rho_{ij})$, attaining the minimal fidelity deviation additionally requires $\operatorname{Var}(C_{t})=0$.
The fidelity deviation therefore provides an additional criterion for distinguishing controller measurements that yield the same maximal average fidelity.
For instance, consider the state
\begin{equation}
\ket{\psi}_{ABC}=\frac{1}{2}(\ket{000}+\ket{101}+\ket{110}+\ket{111}).
\end{equation}
Let $M_{1}$ and $M_{2}$ be measurements on qubit $A$ in the bases $\{(\ket{0}\pm i\ket{1})/\sqrt{2}\}$ and $\{\cos\alpha\ket{0}-\sin\alpha\ket{1}, \sin\alpha\ket{0}+\cos\alpha\ket{1}\}$, respectively, where $\alpha=3\pi/8$.
Both measurements attain the maximal average fidelity $F_{BC}(\ket{\psi}_{ABC})=(4+\sqrt{2})/6 \approx 0.9024$.
However, their fidelity deviations are different and are given by
\begin{align}
&\Delta_{BC}(M_{1};\ket{\psi}_{ABC})=\Delta_{BC}(\ket{\psi}_{ABC})=\frac{2-\sqrt{2}}{6\sqrt{5}}, \nonumber \\
&\Delta_{BC}(M_{2};\ket{\psi}_{ABC})=\sqrt{\frac{27-14\sqrt{2}}{630}}.
\end{align}
Thus, controller measurements yielding the same maximal average fidelity can exhibit different fidelity fluctuations.

This distinction is particularly transparent for the W-class pure states.
Up to local unitaries, a three-qubit W-class pure state can be written as~\cite{Chi2010}
\begin{equation}
\label{eq:w_canonical}
|\phi_W\rangle_{ABC}=\lambda_0|000\rangle+\lambda_1|100\rangle+\lambda_2|101\rangle+\lambda_3|110\rangle,
\end{equation}
where $\lambda_m\geq0$ and $\sum_{m=0}^{3}\lambda_m^2=1$.
For party $A$ acting as the controller, the vanishing three-tangle of the W class gives
\begin{equation}
C_{a}(\rho_{BC})=C_{B|C}(\rho_{BC})=2\lambda_2\lambda_3.
\end{equation}
On the other hand, it follows from the expression for $p_{t}C_{t}$ in the proof of Theorem~\ref{res:thm1} that, for any orthogonal projective measurement on qubit $A$,
\begin{equation}
\mathbb{E}[C_{t}]=2\lambda_2\lambda_3.
\end{equation}
Hence, all such measurements achieve the same maximal average fidelity.
In contrast, the fidelity deviation generally depends on the controller's measurement.
Figure~\ref{fig1} illustrates this measurement dependence for the standard W state, $\ket{W}_{ABC}=(\ket{001}+\ket{010}+\ket{100})/\sqrt{3}$, with the controller's measurement basis parameterized as $\{\cos\beta\ket{0}+\sin\beta\ket{1}, \sin\beta\ket{0}-\cos\beta\ket{1}\}$.
Although all measurements attain the same maximal average fidelity $F_{BC}(\ket{W}_{ABC})=8/9$, the corresponding fidelity deviation varies substantially with the measurement basis.
Therefore, fidelity deviation can reveal aspects of controlled teleportation that are not captured by the average fidelity alone.

\begin{figure}[t]
\centering
\includegraphics[width=\columnwidth]{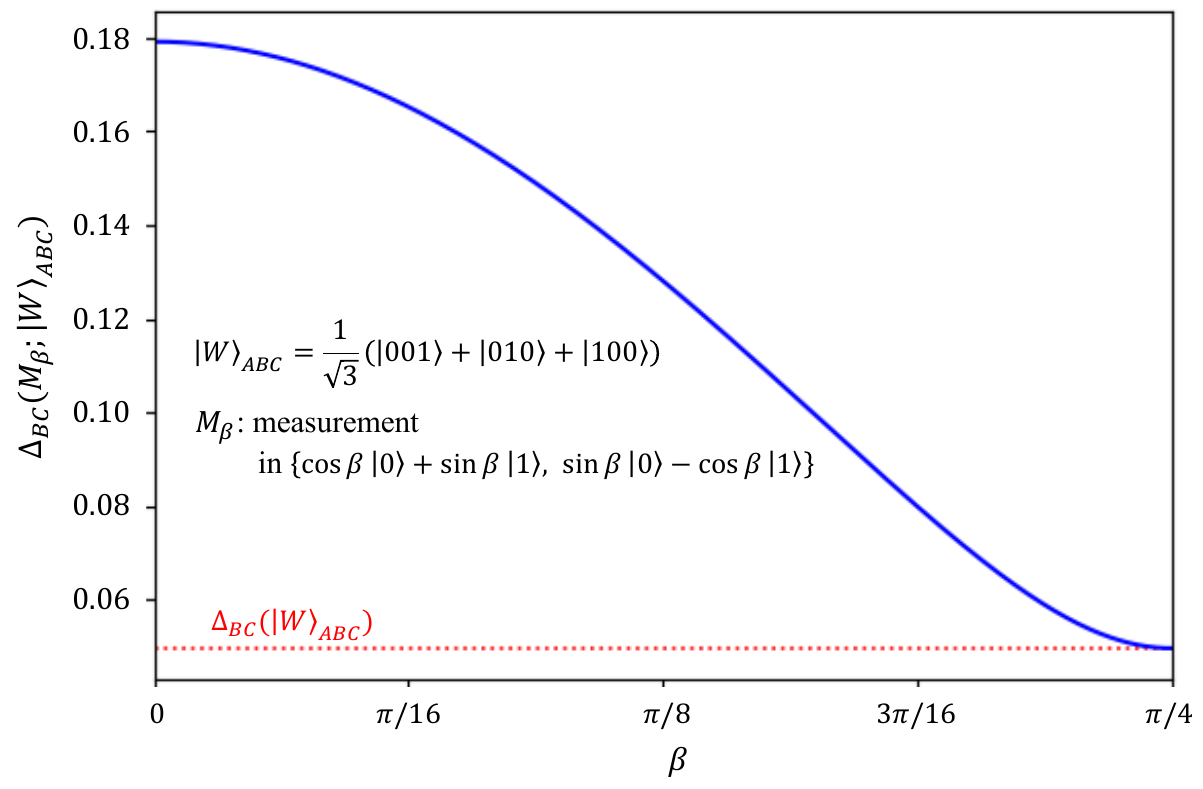}
\caption{
Fidelity deviation for the standard W state as a function of the controller's measurement parameter $\beta$.
All measurements $M_{\beta}$ attain the same maximal average teleportation fidelity $F_{BC}(\ket{W}_{ABC})=8/9$, whereas the fidelity deviation depends on $\beta$ and reaches its minimum $\Delta_{BC}(\ket{W}_{ABC})$ at $\beta=\pi/4$.
}
\label{fig1}
\end{figure}

\section{Effect of control on fidelity deviation} 
\label{sec:effect_control}

The effect of control in controlled teleportation can be characterized by comparing the teleportation performance with and without the controller's assistance.
Recall that the maximal average fidelity satisfies
\begin{equation}
\label{eq:diff_fidel}
F_{ij}(\ket{\phi}_{ABC}) \ge F(\rho_{ij}),
\end{equation}
since, for any orthogonal measurement $M$ on the controller's qubit, $f(\rho_{ij}) \le \sum_{t}p_{t}f(\rho_{ij}^{M,t})$.
The difference between the two fidelities has been introduced as the control power, quantifying the improvement in the average teleportation fidelity enabled by the controller~\cite{Li2014, Li2015, Jeong2016}.

It is then natural to ask whether an analogous relation holds for fidelity deviation.
Specifically, one may expect that the minimal fidelity deviation satisfies
\begin{equation}
\Delta_{ij}(\ket{\phi}_{ABC}) \le \Delta(\rho_{ij}).
\end{equation}
Unlike the case of average teleportation fidelity, however, this relation does not hold in general.
In particular, for three-qubit W-class pure states, the opposite inequality always holds, as established in the following theorem.

\begin{Thm}
\label{res:thm2}
For any three-qubit W-class pure state $\ket{\phi_W}_{ABC}$ with party $k \in \{A,B,C\}$ acting as the controller, the fidelity deviation obtained by directly using the reduced state $\rho_{ij}$ of the two parties without the controller's assistance does not exceed the minimal fidelity deviation achievable with the controller's assistance.
That is,
\begin{equation}
\label{eq:w_reduced_bound}
\Delta(\rho_{ij})\le\Delta_{ij}(\ket{\phi_W}_{ABC})=\frac{1-C(\rho_{ij})}{3\sqrt{5}}.
\end{equation}
\end{Thm}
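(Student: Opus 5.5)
The equality in Eq.~\eqref{eq:w_reduced_bound} is immediate from earlier results, so the real content is the inequality $\Delta(\rho_{ij})\le(1-C(\rho_{ij}))/(3\sqrt{5})$. For a W-class state $\tau_{ABC}=0$, so Eq.~\eqref{eq:coa_tangle} gives $C_a(\rho_{ij})=C_{i|j}(\rho_{ij})$. Substituting this into Eq.~\eqref{eq:ct_deviation_coa}, which relies on Theorem~\ref{res:thm1}, yields $\Delta_{ij}(\ket{\phi_W}_{ABC})=(1-C(\rho_{ij}))/(3\sqrt{5})$. As in the proof of Theorem~\ref{res:thm1}, I would take $A$ as the controller without loss of generality and work with the canonical form Eq.~\eqref{eq:w_canonical}.

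\emph{Step 1: the mixed-state formula.} For a general two-qubit state I would use the expression of Ref.~\cite{Ghosal2020}. After local unitaries diagonalize the correlation matrix $T_{mn}=\Tr(\rho\,\sigma_m\otimes\sigma_n)$, the deviation of the optimal standard protocol depends only on the singular values $s_1,s_2,s_3$ of $T$:
\begin{equation*}
\Delta(\rho)=\frac{1}{3\sqrt{10}}\sqrt{(s_1-s_2)^2+(s_1-s_3)^2+(s_2-s_3)^2}.
\end{equation*}
This holds in the regime relevant here, namely $\det T\le 0$, which is the regime attaining $F(\rho)=(2f+1)/3$ for entangled states. Before relying on it, I would check it on a pure state. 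Then $(s_1,s_2,s_3)=(1,C,C)$, and the formula gives $(1-C)/(3\sqrt{5})$, in agreement with Eq.~\eqref{eq:pure_concurrence_deviation}. This confirms the normalization.

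\emph{Step 2: computing $T$ for the reduced state.} From Eq.~\eqref{eq:w_canonical} the reduced state is
\begin{equation*}
\rho_{BC}=\lambda_0^2|00\rangle\langle00|+|v\rangle\langle v|,\qquad |v\rangle=\lambda_1|00\rangle+\lambda_2|01\rangle+\lambda_3|10\rangle .
\end{equation*}
A direct computation gives $T_{xx}=T_{yy}=2\lambda_2\lambda_3=C$, $T_{xz}=2\lambda_1\lambda_3$, $T_{zx}=2\lambda_1\lambda_2$, and $T_{zz}=d:=\lambda_0^2+\lambda_1^2-\lambda_2^2-\lambda_3^2$, with all other entries zero. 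So one singular value is $s_3=C$, coming from the $y$ direction. The other two, $s_1\ge s_2$, are the singular values of the block
\begin{equation*}
\begin{pmatrix} C & 2\lambda_1\lambda_3\\ 2\lambda_1\lambda_2 & d\end{pmatrix}.
\end{equation*}
They are determined by
\begin{equation*}
s_1^2+s_2^2=C^2+4\lambda_1^2(\lambda_2^2+\lambda_3^2)+d^2,\qquad s_1s_2=\left|Cd-4\lambda_1^2\lambda_2\lambda_3\right|.
\end{equation*}

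\emph{Step 3: the inequality (main obstacle).} The claim reduces to
\begin{equation*}
Q:=(s_1-s_2)^2+(s_1-C)^2+(s_2-C)^2\le 2(1-C)^2 .
\end{equation*}
Expanding, $Q=2(s_1^2+s_2^2)-2s_1s_2-2C(s_1+s_2)+2C^2$. Using $s_1+s_2=\sqrt{s_1^2+s_2^2+2s_1s_2}$ and the invariants above, I would rewrite the target as a polynomial inequality in $\lambda_0,\ldots,\lambda_3$ with the constraint $\sum\lambda_m^2=1$. I would then square away the single square root and try to exhibit the difference as a sum of squares, or as a product of manifestly nonnegative factors.

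Two sanity checks support this route. At $\lambda_0=\lambda_1=0$ the state is pure and equality holds, consistent with Step 1. A physical check is that $s_1\le1$ and $s_2\ge$ roughly $C$, which is what should keep $Q$ small. I also need to confirm, via the sign of $\det T=-C\,(Cd-4\lambda_1^2\lambda_2\lambda_3)$, that the case assumption of the formula in Step 1 holds for every W-class reduced state; if it fails, the case must be handled separately. If the direct algebra becomes unwieldy, my fallback is to parametrize by $C$ and $\lambda_1$, eliminating $d$ via normalization, and show that $Q$ is maximized at $\lambda_0=\lambda_1=0$ for fixed $C$.
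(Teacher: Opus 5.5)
\textbf{The gap is in Step 1.} Your sum-of-squares expression $\frac{1}{90}\sum_{a<b}(s_a-s_b)^2$ is only the $\det T\le 0$ branch of $\Delta^2=\Tr(T^{\mathsf T}T)/30-\mathcal{N}(T)^2/90$. Entanglement does not force $\det T\le 0$, and for W-class reduced states this condition fails on a large region.

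Your determinant also has the wrong sign. Expanding along the $y$ row, the $(y,y)$ cofactor carries a plus sign, so $\det T_{BC}=C\,(Cd-4\lambda_1^2\lambda_2\lambda_3)=C^2(2\lambda_0^2-1)$. This is positive whenever $\lambda_0^2>1/2$ and $C>0$, and there $\rho_{BC}$ is entangled.

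On that region $\mathcal{N}=s_1+s_2-s_{\min}$. So $\Delta^2$ equals your expression with $s_{\min}\to -s_{\min}$, which is strictly larger. Take $\lambda_1=0$, $\lambda_0^2=3/5$, $\lambda_2^2=\lambda_3^2=1/5$. Then $T_{BC}=\mathrm{diag}(2/5,2/5,1/5)$ and the true value is $\Delta^2(\rho_{BC})=1/125=(1-C)^2/45$, so the bound is tight. Your formula gives $1/1125$ instead. Proving $Q\le 2(1-C)^2$ would therefore not establish the theorem on this region. Worse, with your sign, the planned check would declare exactly this region safe and flag the harmless one $\lambda_0^2<1/2$.

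\textbf{Which singular value gets the minus sign.} To handle $\det T>0$ you must know this. Write $M$ for your $2\times 2$ block, with singular values $\nu_+\ge\nu_-$. Your labeling $s_3=C$, and the heuristic that the smaller block singular value is at least roughly $C$, point the wrong way. The paper proves $\nu_-\le C\le\nu_+$ from two facts:
\[
\nu_+\nu_-=|\det M|=C|2\lambda_0^2-1|,\qquad (\nu_++\nu_-)^2-(C+|2\lambda_0^2-1|)^2=4\lambda_0^2\lambda_1^2 .
\]
Hence the flipped value is $\nu_-$, not $C$.

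Now use $\nu_+\pm\nu_-=\sqrt{\Tr(M^{\mathsf T}M)\pm 2|\det M|}$. Both branches then give the same value $\mathcal{N}(T_{BC})=C+\sqrt{B}$, with $B=\Tr(M^{\mathsf T}M)+2C(1-2\lambda_0^2)$; in your branch this is just $s_1+s_2=\sqrt{B}$. The theorem therefore reduces, uniformly in both branches, to $L\le C\sqrt{B}$ with $L=\Tr(M^{\mathsf T}M)+C(2\lambda_0^2+1)-1$. This is exactly what your Step 3 expansion produces when $\det T\le 0$.

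\textbf{Step 3 is still only a plan.} The paper closes it with two identities. Setting $R=1+C-2\lambda_0^2$, they are $L=CR-4\lambda_0^2(\lambda_2-\lambda_3)^2$ and $B=R^2+4\lambda_0^2\lambda_1^2$. Together these give $L\le CR\le C\sqrt{B}$.
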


\begin{proof}
Without loss of generality, we take party $A$ to be the controller, as the other cases follow by permutation of the subsystems.
For the canonical form of a $W$-class pure state in Eq.~\eqref{eq:w_canonical}, tracing out subsystem $A$ gives
\begin{align}
\rho_{BC} = \lambda_0^2|00\rangle\langle00|+|v\rangle\langle v|,
\label{eq:w_reduced_state}
\end{align}
where $|v\rangle = \lambda_1|00\rangle+\lambda_2|01\rangle+\lambda_3|10\rangle$.
Its correlation matrix, defined by $[T_{BC}]_{mn}=\Tr[\rho_{BC}(\sigma_m\otimes\sigma_n)]$ with Pauli operators $\{\sigma_{m}\}_{m=1}^{3}$, is
\begin{equation}
    T_{BC} = 
    \begin{pmatrix} 2\lambda_2\lambda_3 & 0 & 2\lambda_1\lambda_3\\0 & 2\lambda_2\lambda_3 & 0\\2\lambda_1\lambda_2 & 0 & \lambda_0^2+\lambda_1^2-\lambda_2^2-\lambda_3^2 \end{pmatrix}
\label{eq:w_reduced_correlation}
\end{equation}
Let $\mu=2\lambda_2\lambda_3$.
The singular values of $T_{BC}$ are then $\mu,\nu_+,\nu_-$, where $\nu_+ \ge \nu_- \ge 0$ are the singular values of
\begin{equation}
M =
\begin{pmatrix}\mu & 2\lambda_1\lambda_3\\2\lambda_1\lambda_2 & 1-2(\lambda_2^2+\lambda_3^2)\end{pmatrix}.
\end{equation}
These singular values satisfy
\begin{equation}
\nu_-\leq\mu\leq\nu_+.
\label{eq:singular_order}
\end{equation}
Indeed, $\nu_+\nu_-=|\det M|=\mu|(2\lambda_0^2-1)|$ and
\begin{equation}
\quad \nu_++\nu_-=\sqrt{\Tr(M^{\mathsf T}M)+2|\det M|},
\end{equation}
where $\Tr(M^{\mathsf T}M)=1+\mu^2-4\lambda_0^2(\lambda_2^2+\lambda_3^2)$.
Using $\sum_{m=0}^3\lambda_m^2=1$, we obtain
\begin{equation}
(\nu_++\nu_-)^2
-\left(\mu+|2\lambda_0^2-1|\right)^2 = 4\lambda_0^2\lambda_1^2 \geq 0.
\end{equation}
Hence, $\nu_++\nu_- \ge \mu+|2\lambda_0^2-1|$, and therefore
\begin{align}
(\mu-\nu_+)(\mu-\nu_-)=\mu\left(\mu+|(2\lambda_0^2-1)|-(\nu_{+}+\nu_{-})\right) \le 0,
\end{align}
which proves Eq.~\eqref{eq:singular_order}.

\begin{figure*}[t]
\centering
\includegraphics[width=\textwidth]{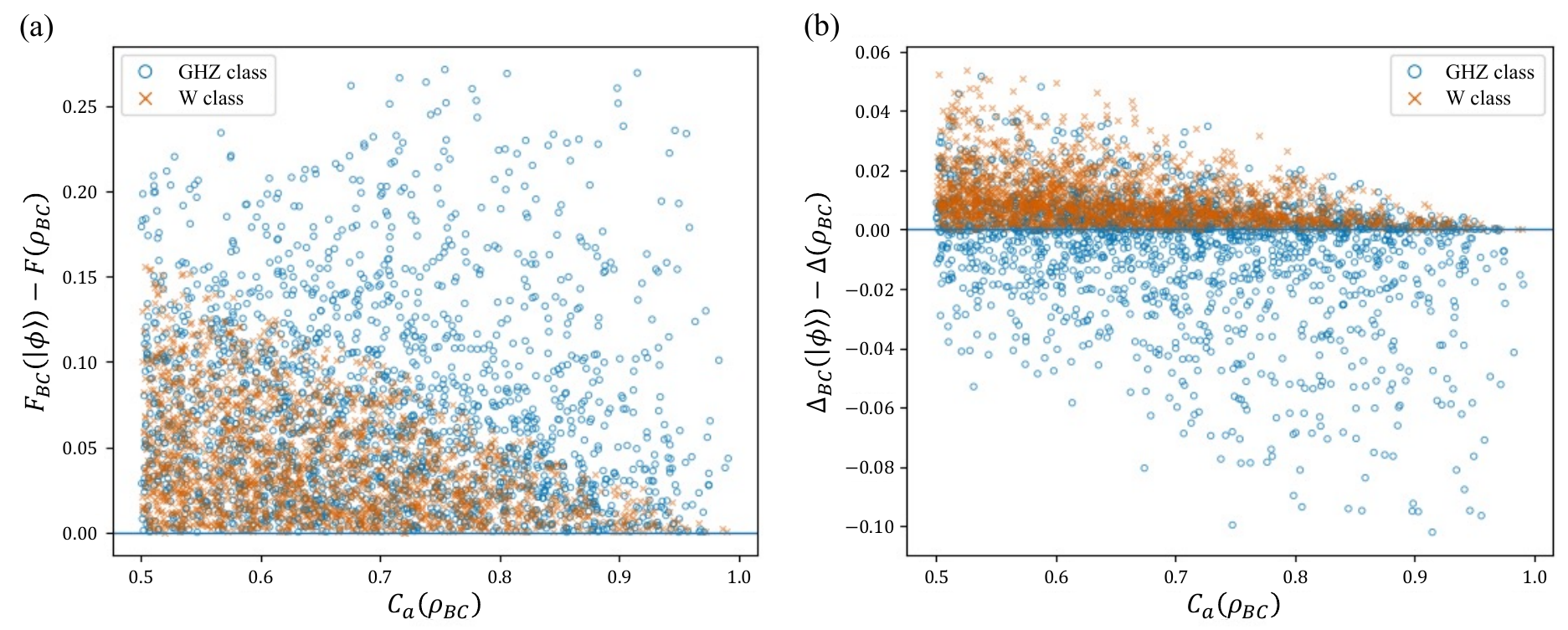}
\caption{
Effect of controller assistance on the maximal average fidelity and fidelity deviation for randomly sampled three-qubit pure states, with $A$ acting as the controller.
For each of the W and GHZ classes, 2000 randomly generated states satisfying $0.5 \le C_{a}(\rho_{BC}) \le 1$ are shown.
(a) Difference between the maximal average fidelity with controller assistance and that obtained directly from the reduced state $\rho_{BC}$.
(b) Difference between the minimal fidelity deviation with controller assistance and the fidelity deviation obtained directly from the reduced state $\rho_{BC}$.
The W-class samples satisfy $\Delta_{BC}(\ket{\phi}_{ABC})-\Delta(\rho_{BC}) \ge 0$, in accordance with Theorem~\ref{res:thm2}, whereas the GHZ-class samples exhibit both positive and negative differences.
}
\label{fig2}
\end{figure*}

For a two-qubit state $\varrho$ and the corresponding correlation matrix $R$ with singular values $s_1\ge s_2\ge s_3 \ge 0$, the maximal teleportation fidelity and the corresponding fidelity deviation are \cite{Horodecki1999,Ghosal2020}
\begin{equation}
F(\varrho)=\frac{3+\mathcal{N}(R)}{6}\quad\text{and}\quad\Delta(\varrho)=\sqrt{\frac{\Tr(R^{\mathsf T}R)}{30}-\frac{\mathcal{N}(R)^2}{90}},
\label{eq:mixed_fidelity_deviation}
\end{equation}
respectively,
where
\begin{equation}
\mathcal{N}(R)=\begin{cases}s_1+s_2+s_3, & \det R\leq0,\\s_1+s_2-s_3, & \det R>0.\end{cases}
\end{equation}
For $\rho_{BC}$, using Eq.~\eqref{eq:singular_order} together with $\det T_{BC}=\mu \det M = \mu^{2}(2\lambda_0^2-1)$ and
\begin{equation}
\quad \nu_+-\nu_-=\sqrt{\Tr(M^{\mathsf T}M)-2|\det M|},   
\end{equation}
we obtain
\begin{equation}
\mathcal{N}(T_{BC})=\mu+\sqrt{B},
\label{eq:w_N_B}
\end{equation}
where $B=\Tr(M^{\mathsf T}M)+2\mu(1-2\lambda_0^2)$.
Consequently,
\begin{align}
    F(\rho_{BC})&=\frac{3+\mu+\sqrt{B}}{6},
    \label{eq:w_reduced_fidelity}\\
    \Delta(\rho_{BC})&=\sqrt{\frac{\mu^2+\Tr(M^{\mathsf T}M)}{30}-\frac{(\mu+\sqrt{B})^2}{90}}.
    \label{eq:w_reduced_deviation}
\end{align}

We now show that
\begin{equation}
\label{eq:w_reduced_bound_squared}
\Delta^{2}(\rho_{BC})\le\left(\frac{1-C(\rho_{BC})}{3\sqrt{5}}\right)^{2}.
\end{equation}
Since $C(\rho_{BC})=2\lambda_{2}\lambda_{3}=\mu$, Eq.~\eqref{eq:w_reduced_bound_squared} is equivalent to
\begin{equation}
\label{eq:w_reduced_bound_squared_equiv}
L\leq\mu\sqrt{B},
\end{equation}
where $L=\Tr(M^{\mathsf T}M)+\mu(2\lambda_0^2+1)-1$.
Defining $R=1+\mu-2\lambda_0^2$, we obtain
\begin{align}
L&=\mu R-4\lambda_0^2\left(\lambda_2^2+\lambda_3^2-\mu\right)\notag\\&=\mu R-4\lambda_0^2(\lambda_2-\lambda_3)^2\le\mu R
\end{align}
and
\begin{equation}
B=R^2+4\lambda_0^2\lambda_1^2\geq R^2.
\end{equation}
Therefore, Eq.~\eqref{eq:w_reduced_bound_squared_equiv} is satisfied, completing the proof.
\end{proof}

The effect of control on the fidelity deviation can be qualitatively different for states outside the W class.
To illustrate this, let $\varphi=(1+\sqrt{5})/2$ and $\beta=\pi/12$, and consider the two states
\begin{align}
&\ket{\psi_W}_{ABC}=\frac{1}{2}(\ket{000}+\ket{100}+\ket{101}+\ket{110}), \nonumber \\
&\ket{\psi_{GHZ}}_{ABC}=\frac{\cos\beta}{\sqrt{\varphi}}\ket{000}+\frac{\cos\beta}{\varphi}\ket{100}+\sin\beta\ket{111}.
\end{align}
The first state belongs to the W class, whereas the second belongs to the Greenberger-Horne-Zeilinger~(GHZ) class~\cite{Dur2000}.
With party $A$ as the controller, both states have $C_{a}(\rho_{BC})=1/2$, and hence yield the same maximal average fidelity $F_{BC}=5/6$ and minimal fidelity deviation $\Delta_{BC}=1/(6\sqrt{5})$.
Their reduced states also have the same maximal average teleportation fidelity, $F(\rho_{BC}^{W})=F(\rho_{BC}^{GHZ})=(7+\sqrt{5})/12$, whereas their fidelity deviations are different.
In particular,
\begin{equation}
\Delta(\rho_{BC}^{W})=\frac{\sqrt{5}-1}{6\sqrt{10}} < \Delta_{BC} < \Delta(\rho_{BC}^{GHZ})=\frac{\sqrt{5}-1}{12}.
\end{equation}
Thus, control increases the fidelity deviation for $\ket{\psi_W}_{ABC}$, in accordance with Theorem~\ref{res:thm2}, while it decreases the fidelity deviation for $\ket{\psi_{GHZ}}_{ABC}$.
This example shows that states with the same maximal average fidelity and minimal fidelity deviation, as well as the same reduced-state fidelity, can nevertheless exhibit qualitatively different effects of control on the fidelity deviation.

To further illustrate the effect of control, we randomly sample W- and GHZ-class pure states in their canonical forms.
For each class, the squared amplitudes are sampled from a Dirichlet distribution, with the phase in the GHZ-class canonical form sampled uniformly from $[0, \pi]$.
We retain states satisfying $0.5 \le C_{a}(\rho_{BC}) \le 1$, with $A$ acting as the controller.
Figure~\ref{fig2} shows the resulting control power and the difference in fidelity deviation with and without controller assistance.
As expected from Eq.~\eqref{eq:diff_fidel}, the control power is nonnegative for both classes.
In contrast, the difference in fidelity deviation exhibits qualitatively different behavior.
It remains nonnegative for the W-class states, consistent with Theorem~\ref{res:thm2}, whereas both positive and negative values occur for the sampled GHZ-class states.
These results further demonstrate that, unlike the maximal average fidelity, the fidelity deviation does not exhibit a universal direction of change under controller assistance.

\section{Conclusion} 
\label{sec: conclusion}

In this work, we have investigated the fidelity deviation in controlled teleportation with three-qubit pure states.
We have formulated the fidelity deviation associated with the controller's measurement by taking into account both the fidelity fluctuations within each measurement outcome and those among different outcomes.
For three-qubit pure states, we have derived the minimal fidelity deviation in terms of the concurrence of assistance of the reduced two-qubit state and have shown that it can be attained simultaneously with the maximal average fidelity.
However, even among measurements attaining the same maximal average fidelity, the fidelity deviation can depend on the choice of the controller's measurement.
Finally, by comparing controlled teleportation with direct teleportation using the reduced state, we have found that, while controller assistance cannot decrease the maximal average fidelity, it does not necessarily decrease the fidelity deviation.
In particular, for W-class pure states, we have proved that the minimal fidelity deviation with controller assistance is no smaller than that without assistance, whereas states outside the W class can exhibit the opposite behavior.

The class-dependent behavior found in this work raises the question of whether the effect of controller assistance on the fidelity deviation is related to the underlying multipartite entanglement structure.
In particular, it would be interesting to investigate whether the difference between the fidelity deviations with and without controller assistance can be quantitatively related to quantities characterizing three-qubit entanglement, such as the three-tangle, or to measures of genuine multipartite entanglement.
Such an analysis may clarify what aspects of multipartite entanglement are reflected in the fidelity deviation beyond the concurrence of assistance.
Another direction is to extend the present analysis to mixed states, higher-dimensional systems, and more general multipartite settings, as well as beyond the standard teleportation protocol.
Finally, it would be intriguing to explore the operational significance of fidelity deviation in teleportation tasks where both the average fidelity and its variation over input states are relevant.

\par\medskip
\par\medskip
\section*{ACKNOWLEDGMENTS}
This research was supported by the Korea Institute of Science and Technology Information (KISTI) R\&D (Grant No. K26L3M3C3).

\bibliography{reference}

\end{document}